\documentclass[sigconf, screen]{acmart}

\usepackage{booktabs} 

\usepackage{xspace}

\usepackage{todonotes}

\usepackage{algorithm}
\usepackage{algorithmic}
\usepackage{amssymb}
\usepackage{xspace}
\bibliographystyle{plain}
\usepackage{comment}
\usepackage{pgfplots}
\usepackage{pgfplotstable}
\usepackage{filecontents}
\usepgfplotslibrary{colorbrewer}
\usepackage{subfigure}
\definecolor{myRed}{rgb}{215,25,28}
\definecolor{myOrange}{rgb}{253,174,97}
\definecolor{myLightBlue}{rgb}{171,217,233}
\definecolor{myBlue}{rgb}{44,123,182}

\pgfplotsset{
width= \columnwidth,
height= 6cm,
cycle list={
{cyan,  thick, mark=square*, mark options=solid , mark size=1pt, loosely dotted},
{blue,  thick, mark=pentagon,  mark options=solid, mark size=2.5pt, dashed},
{orange,  thick, mark=square,  mark options=solid, mark size=2.5pt, dashed},
{red,  thick, mark=diamond,  mark options=solid, mark size=2.5pt, dashed},
{blue,  thick, mark=pentagon,  mark options=solid, mark size=3.5pt},
{orange,  thick, mark=square,  mark options=solid, mark size=2.5pt},
{red,  thick, mark=diamond,  mark options=solid, mark size=2.5pt}
},
ylabel near ticks,
legend style={
at={(0.5,-0.2)},
anchor=north,
legend columns=2,
cells={anchor=west},
font=\footnotesize,
rounded corners=2pt
}
}
\begin{document}
\title{Session Guarantees with Raft and Hybrid Logical Clocks}


\author{Mohammad Roohitavaf}
\affiliation{%
 \institution{Michigan State University}
 \city{East Lansing}
 \state{MI}
 \country{USA}
 }
\additionalaffiliation{%
\institution{eBay Inc.}
 \city{San Jose}
 \state{CA}
 \country{USA}}

 \email{roohitav@cse.msu.edu}

 \author{Jung-Sang Ahn}
\affiliation{%
 \institution{eBay Inc.}
 \city{San Jose}
 \state{CA}
 \country{USA}
 }
   \email{junahn@ebay.com}

\author{Woon-Hak Kang}
\affiliation{%
 \institution{eBay Inc.}
 \city{San Jose}
 \state{CA}
 \country{USA}
 }
\email{wokang@ebay.com}

  \author{Kun Ren}
\affiliation{%
 \institution{eBay Inc.}
 \city{San Jose}
 \state{CA}
 \country{USA}
 }
 \email{kuren@ebay.com}

 \author{Gene Zhang}
\affiliation{%
 \institution{eBay Inc.}
 \city{San Jose}
 \state{CA}
 \country{USA}
 }
 \email{genzhang@ebay.com}

   \author{Sami Ben-Romdhane}
\affiliation{%
 \institution{eBay Inc.}
 \city{San Jose}
 \state{CA}
 \country{USA}
 }
 \email{sbenromdhane@ebay.com}
 
 \author{Sandeep S. Kulkarni}
\affiliation{%
 \institution{Michigan State University}
 \city{East Lansing}
 \state{MI}
 \country{USA}
 }
\email{sandeep@cse.msu.edu}

\renewcommand{\shortauthors}{M. Roohitavaf, J-S Ahn, W-H Kang, K. Ren, G. Zhang, S. Ben-Romdhane and S. Kulkarni}


\begin{abstract}
Eventual consistency is a popular consistency model for geo-replica\-ted data stores. Although eventual consistency provides high performance and availability, it can cause anomalies that make programming complex for application developers. Session guarantees can remove some of these anomalies while causing much lower overhead compared with stronger consistency models. In this paper, we provide a protocol for providing session guarantees for NuKV, a key-value store developed for services with very high availability and performance requirements at eBay. NuKV relies on the Raft protocol for replication inside datacenters, and uses eventual consistency for replication among datacenters. We provide modified versions of conventional session guarantees to avoid the problem of slowdown cascades in systems with large numbers of partitions. We also use Hybrid Logical Clocks to eliminate the need for delaying write operations to satisfy session guarantees. Our experiments show that our protocol provides session guarantees with a negligible overhead when compared with eventual consistency. 

\end{abstract}

\keywords{Distributed key-value stores, Session Guarantees, Hybrid Logical Clocks, Consistency, Replication, Raft.}

\maketitle
\section{Introduction}
\label{sec:intro}

Due to its performance and availability benefits, \textit{eventual consistency} has become a popular consistency model for geo-replicated data stores. The eventual consistency is a weak consistency model that guarantees, in the absence of new writes, all connected replicas will eventually converge to the same state.  
This weak consistency allows replicas to communicate asynchronously, and allow clients to perform their operations without waiting for the long network latencies among replicas. 
Despite these benefits, eventual consistency makes the application development complex for the programmers. For instance, a client may not be able to read the value that it has just written, because its write and read operations have been directed to different replicas with different states. 

To simplify the design of applications, stronger levels of consistency -- causal, sequential etc.-- have been proposed. Specifically, causal systems track the causal dependencies of the versions and do not make a version visible in a replica, if some of the causal dependencies of the version are not visible in the replica. We can track causal dependencies explicitly as a list of versions such as in \cite{cops}, or track them implicitly using a scalar \cite{gentleRain}, a vector \cite{causalSpartan}, or a matrix \cite{orbe} timestamps. Although causal consistency provides useful guarantees, it can cause significant performance overhead, especially for partitioned systems. This overhead is not acceptable for certain systems \cite{facebook}.

Session guarantees are consistency models that are weaker than causal consistency but stronger than eventual consistency. Session guarantees include properties such as monotonic-read, monotonic-write, read-your-own-write, and write-follows-reads.
To illustrate some of these guarantees, consider a scenario where we have one key $k$ with initial value $0$. Client $c1$ writes $1$ to $k$. Without monotonic-read, it is possible that a client $c2$ first reads the value $1$ and then the value $0$. Next, suppose that client $c2$ writes the value 2 (at another replica). Without monotonic-write, it is possible that the final value of $k$ is $1$. 

In this paper, we evaluate the cost of providing session guarantees for our partitioned and replicated system that relies on the Raft algorithm \cite{raft} for replication inside datacenters and eventual consistency across datacenters. 
Although Raft guarantees that all replicas inside one datacenter apply writes with the exact same order, it does not guarantee that all replicas provide the same state at all time. For example, it is possible that a client writes on the leader, but does not find its update on one of the replicas that has not committed the update yet. Using session guarantees, we can prevent such anomalies. We use Hybrid Logical Clocks (HLCs) to provide wait-free write operations while satisfying monotonic-writes and write-follows-reads guarantees.

One problem of achieving consistencies stronger than eventual for partitioned systems at scale is the problem of \textit{slowdown cascades}. Specifically, in partitioned systems delaying the visibility of updates to make sure other partitions are updated enough can create a cascade of slowdowns caused by an only one failed/slow partition. We define modified versions of conventional session guarantees such as those considered in \cite{xerox}. These modified versions allow us to separately keep track of session guarantees for each partition. This removes the need for cross-partition communication which in turn eliminates the problem of slowdown cascade. Moreover, using our protocol, session guarantees can be requested by the client on a per-request basis. This allows clients to achieve their desired levels of consistency only when they need them, and enjoy the performance benefit of weaker consistency models when stronger consistency is not necessary for a specific operation.

The contributions of this paper are as follows: 
\begin{itemize}
\item We present the design of our key-value store, NuKV. This key-value store is designed for services with very high availability and performance requirements at eBay. 
\item We evaluate the cost of providing session guarantees in our system that uses Raft protocol for intra-datacenter replication, and eventual consistency for inter-datacenter replication. 
This allows the client a flexibility to access any datacenter it desires. If it dynamically changes the datacenter, it will receive eventual consistency. However, sequential consistency is provided within a datacenter. Our goal is to provide session guarantees even if clients access data from different datacenters. Hence, we identify the cost of providing such session guarantees.

\item We provide an algorithm based on HLCs that eliminates the need for delaying write operations to satisfy monotonic-write and write-follows-read guarantees. We note that HLCs are necessary for eliminating the write delay. We show that without HLCs, the cost of providing monotonic-write and write-follows-read guarantees is higher. 

\item We provide modified versions of session guarantees that do not cause slowdown cascade for systems with large numbers of partitions. 

\end{itemize}

The rest of paper is organized as follows: In Section \ref{sec:back}, we provide a background on log replication via Raft and HLCs.  In Section \ref{sec:consistency}, we provide definitions for session guarantees. In Section \ref{sec:arch}, we provide the architecture and assumptions of our system. Section \ref{sec:protocol} provides the protocol. The experimental results are provided in Section \ref{sec:results}. Section \ref{sec:related} reviews the related work. Finally, Section \ref{sec:con} concludes the paper.

\section{Background}
\label{sec:back}

In this section, we provide a background on log replication and HLCs. We also explain why we use HLCs \cite{hlc} instead of conventional logical clocks \cite{lamport} or physical clocks.

\subsection{Log Replication and Raft}

While one of the purposes of creating replicas in different datacenters is to reduce the network latency for clients, replicas inside one datacenter are mainly created for fault-tolerance. Specifically, we want updates to be durable and clients to be able to use the data, even if some machine inside a datacenter fails. Since network partitions do not occur inside one datacenter and the network latency is very small, providing high levels of consistency for replicas inside one datacenter is feasible. We use log replication via Raft \cite{raft} protocol to make sure all updates are applied with the same order in replicas inside one datacenter. Specifically, replicas maintain a replicated log consisting of the sequence of operations to execute. Each entry on the log is associated with an index which shows its location on the log. Raft \cite{raft} guarantees that all replicas find the same operation for any index on their logs.

Raft relies on leader election for providing log replication. After electing a leader, the leader has the complete responsibility for managing the log. The leader accepts update requests from the clients, adds entries on its log, and forces the log of other servers to follow its log. The leader may crash which cause a new round of leader election. Raft guarantees that the new leader always has the most recent version of the log. Raft\cite{raft} provides the same functionality as other protocols such as multi-Paxos \cite{paxos}. The main goal of Raft is to reduce the complexity and subtleties of understanding and implementing previous protocols.

\subsection{Hybrid Logical Clocks}
\label{sec:hlc}

Hybrid Logical Clocks (HLCs) \cite{hlc} allows us to capture \textit{happens-before} relation \cite{lamport} while assigning timestamps that are very close to the physical clocks. The HLC timestamp of an event $e$, is a tuple $\langle l.e, c.e \rangle$. $l.e$ is our best approximation of the physical time when $e$ occurs. $c.e$ is a bounded counter that is used to capture causality whenever $l.e$ is not enough to capture causality. Specifically, if we have two events $e$ and $f$ such that e happens-before $f$ and $l.e = l.f$, to capture causality between $e$ and $f$, we set $c.e$ to a value higher than $c.f$. 

\textbf{Why using HLCs?} We want to timestamp versions such that two following requirements are satisfied: 

\begin{itemize}
\item timestamps are close (within clock drift error) to the physical time when the event of writing occurs, and 
\item timestamps capture happens-before relation 
\end{itemize}

Logical clocks fail to satisfy the first requirements. Specifically, logical timestamps have no relation to the wall clock time. This causes several problems. First, we cannot use timestamps to provide clients with the version of a data object at a given physical time. More importantly, we cannot resolve conflicting writes based on timestamp. For instance, assume a client writes version $v$ for a data object on replica $A$. One hour later, another client writes version $v'$ for the same data object on replica $B$. With logical timestamps, it is possible that the timestamp assigned $v'$ to be smaller than the version assigned to version $v$. In this situation, if we use timestamps to select the winner version (i.e. the version with higher timestamp is the winner), $v$ will be selected as the winner which is not desired, as we want the version written one hour later to be the winner.

On the other hand, physical clocks fail to satisfy the second requirement. This is due to the fact that perfect clock synchronization is impossible. Thus, the clock skew between different replicas prevents physical clocks from accurately capturing happens-before relation. For instance, imagine a client writes version $v$ for a data object on replica $A$. Next, the client write version $v'$ on replica $B$. While using logical clocks we can guarantee that timestamp assign to $v'$ is higher than that of assigned to $v$, using physical clocks it is not guaranteed. Specifically, if the clock of replica $B$ is behind $A$, the timestamp assigned to $v'$ may be smaller than that of $v$.

HLCs solves both issues explained above, i.e., it allows us to capture the happens-before relation, and at the same time, allows us to provide conflict resolution with respect to the physical time.

\section{Session Guarantees}
\label{sec:consistency}

In this section, we define various session guarantees for key-value stores. We provide modified versions of session guarantees considered in \cite{xerox}. This modified versions define session guarantees per keys, which allows us to avoid the problem of slowdown cascade \cite{facebook} in our key-value store, by avoiding cross-partition communications. Specifically, to satisfy definitions provided here, different partitions can check session guarantees independently. Thus, failure/slowdown of a partition does not affect the visibility of updates in other partitions.

First, regarding writes and reads of a specific key we consider following definitions: 

\begin{definition} 
$CommittedWrites(s, k, t)$ is the ordered sequence of all writes committed for key $k$ at server $s$ at time $t$. 
\end{definition}

\begin{definition} 
$ClientWrites(c, k, t)$ is the ordered sequence of all writes done by client $c$ for key $k$ at time $t$. 
\end{definition}

\begin{definition} 
$ClientsReads(c, k, t)$ is the set of all writes that have written a value for key $k$ read by client $c$ at time $t$. 
\end{definition}

Now, using above definitions, we define various per-key session guarantees as follows: 

\begin{definition} [Per-key Monotonic-Read Consistency]
Let $O$ be an operation by client $c$ that reads the value of key $k$ at time $t$ at server $s$. Operation $O$ satisfies monotonic-read consistency if any $w \in ClientReads (c, k, t)$ is included in $CommittedWrites (s, k, t')$ where $t'$ is the time when server $s$ reads value for key $k$. 
\end{definition}

Monotonic-read consistency is important for the clients, as the lack of monotonic-read causes user confusion by going backward in time. For example, consider a webmail application. Monotonic-read consistency guarantees that the user is always able to see all emails that has seen before. New emails may be added to the mailbox in the next time that the user checks their mailbox, but no email can disappear once the user saw it, unless the user wants to delete it. This is not guaranteed with only eventual consistency. 

Monotonic-read consistency guarantees that the client never misses what it has seen so far. On the other hand, read-you-write consistency guarantees that the client never misses what it has written so far. Specifically, 

\begin{definition} [Per-key Read-your-write Consistency]
Let $O$ be an operation by client $c$ that reads the value of key $k$ at time $t$ at server $s$. Operation $O$ satisfies read-your-write consistency if any $w \in ClientWrites (c, k, t)$ is included in $CommittedWrites (s, k, t')$ where $t'$ is the time when server $s$ reads value for key $k$. 
\end{definition}

For instance, a user must be able to see the post that they just posted on their social network page. This seems to be obvious for a centralized system, but unfortunately, with eventual consistency, it may be violated if the client is routed to another replica when it wants to read the data.

\begin{definition} [Per-key Monotonic-write Consistency]
\label{def:mw}
Let $O$ be an operation by client $c$ that writes a value for key $k$ at time $t$. Operation $O$ satisfies monotonic-write consistency if for any server $s$ at any time $t'$ if $CommittedWrites (s,k,t')$ includes $O$, no client accessing $s$ reads a value written by write $w \neq O$ included in $ClientWrites(c,k,t)$.
\end{definition}

Suppose a user updates their password two times. Per-key monotonic-write consistency together with eventual consistency guarantees that eventually, the user is able to login to the system with the latest passwords in all replicas. Note that just eventual consistency does not guarantee the second version to be the winner version.

\begin{definition} [Per-key Write-follows-reads Consistency]
\label{def:wfr}
Let $O$ be an operation by client $c$ that writes a value for key $k$ at time $t$. Operation $O$ satisfies write-follows-read consistency if for any server $s$ at any time $t'$ if $CommittedWrites (s,k,t')$ includes $O$, no client accessing $s$ reads a value written by write $w \neq O$ included in $ClientReads(c,k,t)$.
\end{definition}

As an example for the necessity of per-key write-follows-reads consistency, consider a shared document in a cloud-based document processing service. Suppose one of the contributors reads the current version of a document and then appends a line to the document. To append the line, the application reads the current value of the data object associated with the content of the document, appends the line, and writes the new content of the document back to the system. The per-key write-follows-reads consistency guarantees that in all replicas, the version of the document with the appended line is the winner version.

With Definitions \ref{def:mw} and \ref{def:wfr}, we can trivially satisfy monotonic-write and write-follows-reads consistency by never return any committed version. To avoid such trivial implementation, we assume following requirements are implicitly required for all definitions: 

\begin{itemize}
\item [R1] Any write for a key committed on a server will be eventually committed on all servers hosting that key. 
\item [R2] Clients reads only committed versions.  
\end{itemize}

\section{System Architecture}
\label{sec:arch}
The data of a typical NuKV deployment is fully replicated on $D$ datacenters. Inside each datacenter, the data is hosted by $P$ partitions. For each partition, we create several replicas inside a datacenter. Thus, we have two levels of replication; at one level, we create replicas of the entire data in several datacenters, at another level, we create replicas of partitions inside a datacenter.  The replicas of a partition inside a datacenter form a Raft group \cite{raft}. Each group has a leader and other replicas are followers. For each partition, $L_d$ denotes the leader at the datacenter $d$. $F_d^i$ denotes the $i$th follower of $L_d$ (see Figure \ref{fig:arch}). 

NuKV follows a multi-leader replication, i.e. clients can write to the leader of any datacenter. Followers learn the new updates via the Raft algorithm \cite{raft}. An update on a datacenter will be asynchronously replicated to the leaders of other datacenters by a special follower of the Raft group denoted by XC. XC servers are stateless. Thus, in case of failure, they can easily recover and resume sending replicate messages to other datacenters. Data nodes, on the other hand, detect repeated writes received from XC servers, and avoid applying them. We assume FIFO channels between datacenters, i.e., the XC sends updates to the leaders of other datacenters with the same order it reads them from its Raft log. Clients usually perform their operations on their local datacenter. However, it is possible that they access other datacenters too. For read operations, clients may use any replica inside a datacenter, but writes are always on the leaders. 

\begin{figure} [h]
\begin{center}
\includegraphics[scale=0.32]{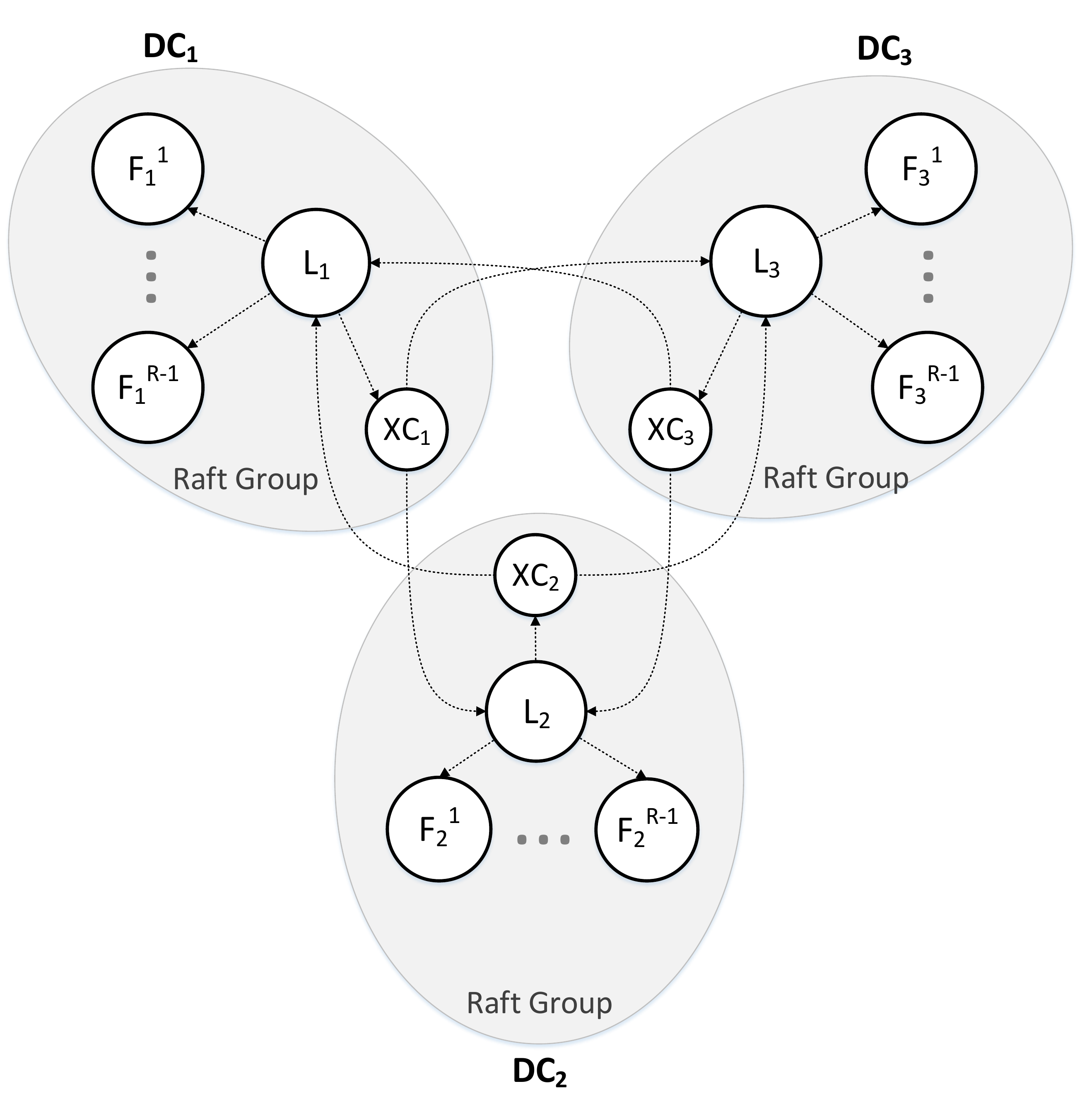}
\caption{An example of a NuKV deployment with three datacenters each with $R$ replicas for each partition.}
\label{fig:arch}
\end{center}
\end{figure}

\section{Protocol}
\label{sec:protocol}

In this section, we provide a protocol for providing different levels of consistency defined in Section \ref{sec:consistency}. The read part of the  protocol is basically an adoption of the protocol provided in \cite{xerox} for our architecture with Raft replication and partitioning explained in Section \ref{sec:arch}. We block read operation, when the server cannot meet the session guarantee request by the client. For the write part, instead of blocking client operations, we use timestamping with HLC \cite{hlc} to provide session guarantees.  

\subsection{Client-side}
\label{sec:protocol:client}
Each client $c$ maintains two $D \times P$ matrices: 1) highest read matrix ($hrm$), and 2) highest write matrix ($hwm$). $hrm [d, p]$ is the highest Raft log index of the versions written in partition $p$ of datacenter $d$ read by client $c$.  $hwm [d, p]$ is the highest Raft log index of the versions written by client $c$ at partition $p$ of datacenter $d$. Note that $hrm$ and $hwm$ are only maintained by the client. They are not sent over the network or stored with keys. Thus, their overhead is negligible. The client also maintains two scalars $dt_r$ and $dt_w$ that maintains the highest HLC timestamps of versions read and written by the client, respectively. 

Algorithm \ref{alg:client-side} shows the client-side of our protocol. The client updates its $hrm$, $hwm$, $dt_r$, and $dt_w$ as it reads and writes the data. When the client wants to read a key on partition $p$, it includes two vectors $hrv$ and $hwv$ with its GET request. The server-side uses these vectors to provide various session guarantees for read operations. Let $O$ be the vector of size $D$ with all entries equal to zero. When the client sends a read request to partition $p$, it can send $O$ or the row corresponding to partition $p$ in its $hrm$, i.e. $hrm[:, p]$ as $hrv$. Similarly, it can send $O$ or the row corresponding to partition $p$ in its $hwm$ i.e. $hwm [:, p]$ as $hwv$. By choosing vectors for $hrv$ and $hwv$, the client can require different session guarantees for its read operations in a per-operation basis as it is shown in Algorithm \ref{alg:client-side}. In this algorithm, monotonic-read-your-write means a session guarantee that requires both monotonic-read and read-your-write guarantees. 

When the client wants to write a key on partition $p$, it includes an integer called dependency time denoted by $dt$ in its PUT request. The server-side uses this value to provide various session guarantees for write operations. The client can control its desired session guarantee for a write operation as it is shown in Algorithm \ref{alg:client-side} by choosing values for $dt$. Choosing $0$ results in just eventual consistency, $dt_w$ results in monotonic-write, $dt_r$ results in write-follows-reads, and $max (dt_r, dt_w)$ results in both monotonic-write and write-follows-reads that we refer to by monotonic-write-follows-reads.

\begin{algorithm} 
{
\caption{Client-side}
\label{alg:client-side}
\begin{algorithmic} [1]

\STATE \textbf{GET} (key $k$, ReadConsistencyLevel $l$)
\STATE \hspace{3mm} $p =$ partition id of $k$
\STATE \hspace{3mm} \textbf{if} ($l=$ \textsc{EVENTUAL}) 
\STATE \hspace{6mm} $hrv = hwv = O$
\STATE \hspace{3mm} \textbf{else if} ($l=$ \textsc{MONOTONIC-READ}) 
\STATE \hspace{6mm} $hwv = O$
\STATE \hspace{6mm} $hrv = hrm[:,p]$
\STATE \hspace{3mm} \textbf{else if} ($l=$ \textsc{READ-YOUR-WRITE}) 
\STATE \hspace{6mm} $hwv = hwm[:,p]$
\STATE \hspace{6mm} $hrv = O$
\STATE \hspace{3mm} \textbf{else if} ($l=$ \textsc{MONOTONIC-READ-YOUR-WRITE}) 
\STATE \hspace{6mm} $hwv = hwm[:,p]$
\STATE \hspace{6mm} $hrv = hrm[:,p]$
\STATE \hspace{3mm} send $\langle \textsc{GetReq} \ k, hwv, hrv\rangle$  to server $i$
\STATE \hspace{3mm} receive $\langle \textsc{GetReply} \ v, dc\_id, log\_idx, t\rangle$
\STATE \hspace{3mm} \label{line:addToHrv} $hrm[dc\_id,s] = max (hrm[dc\_id, s], log\_idx)$ 
\STATE \hspace{3mm} $dt_r = max (dt_r, t)$
\RETURN $v$

\STATE \vspace{3mm} \textbf{PUT} (key $k$, value $v$, WriteConsistencyLevel $l$)
\STATE \hspace{3mm} $s =$ partition id of $k$
\STATE \hspace{3mm} \textbf{if} ($l=$ \textsc{EVENTUAL}) 
\STATE \hspace{6mm} $dt =0$
\STATE \hspace{3mm} \textbf{else if} ($l=$ \textsc{WRITE-FOLLOWS-READS}) 
\STATE \hspace{6mm} $dt = dt_r$
\STATE \hspace{3mm} \textbf{else if} ($l=$ \textsc{MONOTONIC-WRITE}) 
\STATE \hspace{6mm} $dt = dr_w$
\STATE \hspace{3mm} \textbf{else if} ($l=$ \textsc{MONOTONIC-WRITE-FOLLOWS-READS}) 
\STATE \hspace{6mm} $dt = max (dt_r, dt_w)$
\STATE \hspace{3mm} send $\langle \textsc{PutReq} \ k, v, dt\rangle$  to server $i$
\STATE \hspace{3mm} receive $\langle \textsc{PutReply} \ dc\_id, log\_idx, t\rangle$
\STATE \hspace{3mm} \label{line:addToHwv} $hwm[dc\_id,s] = max (hwm[dc\_id, s], log\_idx)$ 
\STATE \hspace{3mm} \label{line:updatingDtw}$dt_w = max (dt_w, t)$
\RETURN

\end{algorithmic}
}
\end{algorithm}

\subsection{Server-side}
\label{sec:protocol:server}

Each server maintains a stable vector ($sv$) with size $D$. $sv[d]$ in server $s$ is the highest log index of the versions written in datacenter $d$ committed in server $s$. For each server, $PC_d$ shows the value of the physical clock, and $HLC_d$ shows the value of the hybrid logical clock in $L_d$. 

Algorithm \ref{alg:server-side} shows the server-side of our protocol. 
In the GET operation handler, the server blocks if for some $i$, $sv[i]$ is smaller than $hrv[i]$ or $hwv[i]$ sent by the client. This guarantees that the server has received all necessary version before reading the value of the requested key. The server, next, returns the version with the highest timestamp among versions written for the requested key. 

In the PUT operation handler, the server first updates its HLC using $dt$ sent by the client by calling method \textit{updateHLC} in Line \ref{line:updateHLC}. This will call HLC algorithm \cite{hlc} and guarantees that $HLC_d$ will be higher than any timestamp read/written by the client based on the session guarantee requested by the client. Next, the server creates a new version, timestamps it with the updated $HLC_d$, and gives it to the Raft algorithm for replication. Once Raft committed, the servers add the version to the version chain of the key and update $sv$ entry for the local datacenter by the Raft log index. After committing the value on the leader, and updating the local entry of $sv$, the server returns reply to the client and includes the id of the local datacenter together with the $sv$ entry for the local datacenter, and the assigned timestamp.  Similar to the leader, other non-XC servers also update their $sv$ upon committing a new version. On the other hand, when an XC server commits version $v$, if $v.dc\_id$ is the same as the datacenter id of the server, i.e. the committed write is a local write, it sends a replicate message to the leader of the other groups to propagate a local write to other datacenters. It includes the log index of the operation writing the version. Upon receiving a replicate message, the leaders in other datacenters append the received message to their Raft log. Followers commit versions received from XC servers like normal local versions and update the their $sv$ upon committing in Line \ref{line:updateSV}. If the XC commits a version with $v.dc_{id}$ that is different from the id of the local datacenter, it simply drops the message and does nothing, because the XC of the original datacenter is responsible for propagating the writes to the other datacenters. 

For the sake of space, the correctness of this protocol is provided in the Appendix. 
\begin{algorithm} [t]
{
\caption{Server-side}
\label{alg:server-side}
\begin{algorithmic} [1]

\STATE \textbf{Upon} receive $\langle \textsc{GetReq} \ k, hwv, hrv\rangle$
\STATE \hspace{3mm} \label{line:getBlock} block while $\exists i$ s.t. $(sv[i] < hrv[i] \vee sv[i] < hwv[i])$
\STATE \hspace{3mm} \label{line:getHighestVersion} $v = $ the version for $k$ with the highest timestamp 
\STATE \hspace{3mm} send $\langle\textsc{GetReply} \  v.value, v.dc\_id, sv[v.dc\_id], v.t\rangle$ to client

\STATE \vspace{3mm} \textbf{Upon} receive $\langle \textsc{PutReq} \ k, value, dt\rangle$ at $L_d$
\STATE \hspace{3mm} \label{line:updateHLC} updateHLC (dt)
\STATE \hspace{3mm} $t = HLC_d$ 
\STATE \hspace{3mm} create new version $v$
\STATE \hspace{3mm} $v.value \leftarrow value$ 
\STATE \hspace{3mm} $v.t \leftarrow t$
\STATE \hspace{3mm} $v.dc\_id \leftarrow d$
\STATE \hspace{3mm} \label{line:appendLogPut} append the $\langle k, v\rangle$ to the Raft log
\STATE \hspace{3mm} send $\langle\textsc{PutReply} \  d, sv[d], t\rangle$ to client

\STATE \vspace{3mm} \textbf{updateHLC ($t$)} at $L_d$
\STATE \hspace{3mm} $l' \leftarrow HLC_d.l$
\STATE \hspace{3mm} $HLC_d.l \leftarrow max (l', PC_d, t.l)$
\STATE \hspace{3mm} \textbf{if} $(HLC_d.l = l' = t.l)$ 
\STATE \hspace{6mm} $HLC_d.c \leftarrow max(HLC_d.c ,t.c)+1$
\STATE \hspace{3mm} \textbf{else if} $(HLC_d.l = l') \ \ HLC_d.c \leftarrow HLC_d.c + 1$
\STATE \hspace{3mm} \textbf{else if} $(HLC_d.l = l) \ \ HLC_d.c \leftarrow t.c + 1$
\STATE \hspace{3mm} \textbf{else} $HLC_d.c \leftarrow 0$

\STATE \vspace{3mm} \textbf{Upon} commit $\langle k, v\rangle$ with $log\_idx$ at a non-XC server 
\STATE \hspace{3mm} \label{line:updateSV} $sv[v.dc\_id] = log\_idx$
\STATE \hspace{3mm} \label{line:addVersion} add $v$ to the version chain of $k$

\STATE \vspace{3mm} \textbf{Upon} commit  $\langle k, v \rangle$ at $XC_d$ with $log\_idx$
\STATE \hspace{3mm} \textbf{if} $v.dc\_id = d$ 
\STATE \hspace{6mm} send $\langle\textsc{Replicate} \ k, v, log\_idx\rangle$ to $L_{d'}$ for all \\  \hspace{6mm} $d' \neq d$

\STATE \vspace{3mm} \textbf{Upon} receive $\langle \textsc{Replicate} \ k, v, log\_idx\rangle$ at $L_d$
\STATE \hspace{3mm} \label{line:appendLogReplicate} append the $\langle k, v\rangle$ with $log\_idx$ to the Raft log
\end{algorithmic}
}
\end{algorithm}

\section{Experimental Results}
\label{sec:results}
In this section, we provide the results of benchmarking the protocol defined in Section \ref{sec:protocol} in the architecture explained in Section \ref{sec:arch}. In Section \ref{sec:setup}, we provide the experimental setup. In Section \ref{sec:local}, first, we evaluate the overhead of different levels of session guarantees when clients never change their datacenter. The results of this case reflect the impact of the delay of the Raft on the performance overhead of session guarantees. Next, we consider the case where clients switch to a remote datacenter for some of their operations. In addition to the Raft delay, the results of this case also show the impact of cross-datacenter propagation delay on the performance overhead of providing session guarantees. In Section \ref{sec:workload}, we investigate the effect of workload characteristics.

\subsection{Experimental Setup}
\label{sec:setup}
NuKV is implemented in C++. For Raft, it uses an enhanced version of \cite{raftImp}. The client and servers are connected by gRPC \cite{grpc} and we use Google Protocol Buffers \cite{protobuf} for marshaling/unmarshaling the data.

We did our experiments on a deployment including two datacenters, each with three replicas on different machines. Inside a datacenter, we also have an XC server that forwards the log commits to the leader of the other datacenter. One datacenter is located at Phoenix, Arizona, and the other datacenter is located at Salt Lake City, Utah. The average RTT time between these two datacenters is \textasciitilde15 (ms). We run all replicas and XC servers on machines with the following specification: 
4 vCPUs, Intel Core Processor (Haswell) 2.0 GHz, 4 GB memory, 40 GB Storage Capacity running Ubuntu 16.04.3.

We run client threads from both datacenters. We use one machine with following specification for each datacenter to run the client threads: 
8 vCPUs, Intel Core Processor (Haswell) 2.0 GHz, 4 GB memory, 40 GB Storage Capacity running Ubuntu 16.04.3.

Each client thread randomly decides to write or read a key. When it decides to write a key, it randomly picks one of the leaders and writes on it. Similarly, when it decides to read a key, it randomly picks one of the six replicas and reads from it. These random selections are done with the given probabilities. With different probabilities, we can study the effect of different client usage patterns and the workload characteristics on the performance of the system. We consider 16B keys and 64B values.  

\subsection{Effect of Locality of Traffic}
\label{sec:local}
In the first set of experiments, we investigate the effect of locality of traffic on the performance of our key-value store. First, we consider the case where clients only access their local datacenter.

We consider seven cases as shown in Table \ref{tab:notations}. The cases with $HLC$ subscript show the results for our protocol that uses HLC, and cases without $HLC$ subscript show results when we do not use HLC.
%
If we do not use HLC then the PUT operation must wait until the current server receives all the relevant updates that must occur before the current PUT operation. However, with HLC, we can simply assign the new PUT operation a higher timestamp to ensure that it will be ordered later than previous writes.

\begin{table*}[]
\begin{tabular}{|l|l|l|l|}
\hline
\textbf{w/o HLC} & \textbf{w/ HLC} & \textbf{Write Guarantee}      & \textbf{Read Guarantee}   \\ \hline \hline
\multicolumn{2}{|c|}{E}            & Eventual                      & Eventual                  \\ \hline
$M/E$              & $M/E_{HLC}$             & Monotonic-write-follows-reads & Eventual                  \\ \hline
$E/M$              & $E/M_{HLC}$             & Eventual                      & Monotonic-read-your-write \\ \hline
$M/M$              & $M/M_{HLC}$             & Monotonic-write-follows-reads & Monotonic-read-your-write \\ \hline
\end{tabular}
\caption{Legend for Figures \ref{fig:localResults}- \ref{fig:upsertResults10} }
\label{tab:notations}
\end{table*}

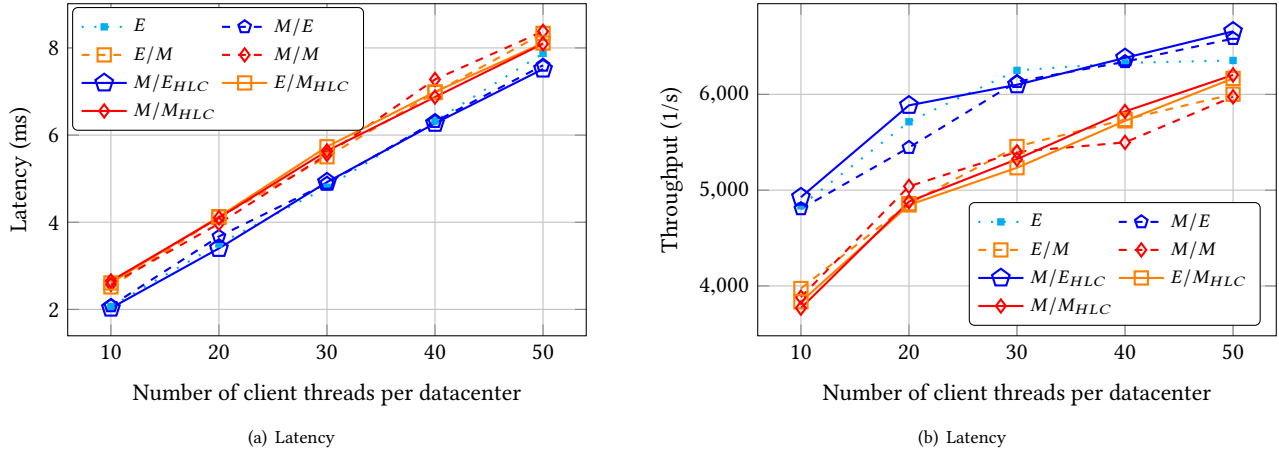
\begin{figure*}
  \centering

  \subfigure[Latency \label{fig:local_prob1_latency}]{
  \begin{tikzpicture}
\begin{axis} [xlabel=Number of client threads per datacenter, ylabel= Latency (ms), legend pos= north west, grid=both, legend style={
at={(0.29,0.99)},
anchor=north}]
\addplot table[x=threads, y=N_E_L, col sep=comma] {data/local_prob1.csv};
\addlegendentry{$E$}

\addplot  table[x=threads, y=M_E_L, col sep=comma] {data/local_prob1.csv};
\addlegendentry{$M/E$}

\addplot  table[x=threads, y=N_M_L, col sep=comma] {data/local_prob1.csv};
\addlegendentry{$E/M$}

\addplot  table[x=threads, y=M_M_L, col sep=comma] {data/local_prob1.csv};
\addlegendentry{$M/M$}

\addplot  table[x=threads, y=M_E_H_L, col sep=comma] {data/local_prob1.csv};
\addlegendentry{$M/E_{HLC}$}

\addplot  table[x=threads, y=N_M_H_L, col sep=comma] {data/local_prob1.csv};
\addlegendentry{$E/M_{HLC}$}

\addplot  table[x=threads, y=M_M_H_L, col sep=comma] {data/local_prob1.csv};
\addlegendentry{$M/M_{HLC}$}

\end{axis}
\end{tikzpicture}
  } \quad \quad
  \subfigure[Latency \label{fig:local_prob1_throughput}]{
\begin{tikzpicture}
\begin{axis} [xlabel=Number of client threads per datacenter, ylabel= Throughput (1/s), legend pos= south east, grid=both]
\addplot  table[x=threads, y=N_E_T, col sep=comma] {data/local_prob1.csv};
\addlegendentry{$E$}

\addplot  table[x=threads, y=M_E_T, col sep=comma] {data/local_prob1.csv};
\addlegendentry{$M/E$}

\addplot  table[x=threads, y=N_M_T, col sep=comma] {data/local_prob1.csv};
\addlegendentry{$E/M$}

\addplot  table[x=threads, y=M_M_T, col sep=comma] {data/local_prob1.csv};
\addlegendentry{$M/M$}

\addplot  table[x=threads, y=M_E_H_T, col sep=comma] {data/local_prob1.csv};
\addlegendentry{$M/E_{HLC}$}

\addplot  table[x=threads, y=N_M_H_T, col sep=comma] {data/local_prob1.csv};
\addlegendentry{$E/M_{HLC}$}

\addplot  table[x=threads, y=M_M_H_T, col sep=comma] {data/local_prob1.csv};
\addlegendentry{$M/M_{HLC}$}

\end{axis}
\end{tikzpicture}
  }
  \caption{The effect of load on the latency and throughput of mixed PUT and GET operations with 100\% local traffic}
  \label{fig:localResults}
\end{figure*}

Figure \ref{fig:local_prob1_latency} and Figure \ref{fig:local_prob1_throughput} show the effect of load on the operation latency and throughput, respectively. In all diagrams with throughout, we report the total operations done by the clients accessing one of the datacenters.  The results are for 0.5:0.5 put:get ratio, i.e., 50\% of operations are PUT, and 50\% GET. Note that when clients do not change their datacenters, our system actually provides sequential consistency for keys of each partition thanks to the Raft protocol \cite{raft}, i.e. all updates are applied with the same other on all replicas in one datacenter. However, regarding the recency of the updates, it is possible that a client reads a version in one replica, but does not find it on another replica.

As expected the latency and throughput increase in all cases as we run more client threads. Since propagation delay is very small inside one datacenter, providing session guarantees causes only a negligible overhead such that for some cases we even observed better latency and throughput for stronger guarantees compared with eventual consistency due to experimental error. However, generally, eventual consistency and $M/E$, and $M/E_{HLC}$ show better results.  In all cases, the additional latency compared with eventual consistency always remains less than 1 (ms). 


Next, we consider the case where clients may use the remote datacenter with 10\% probability. Figures \ref{fig:local_prob09_latency_bar} and \ref{fig:local_prob09_latency} show the effect of load on the operation latency with 0.5:0.5 put:get ratio. Unlike the case with 100\% local traffic, the difference between eventual consistency and other cases is clear here. In all cases, eventual consistency has the lowest latency. $M/E_{HLC}$ provides the same latency as that of eventual consistency due to write-free operations by using HLCs. For 40 client threads, $M/E$, $E/M$, and $M/M$ requires an additional \textasciitilde10 (ms) compared with eventual consistency. When we use HLCs, $M/E_{HLC}$ requires no additional latency, and $E/M_{HLC}$ and $M/M_{HLC}$ requires \textasciitilde7.7 (ms) and \textasciitilde8.6 (ms) additional latency. This improvement allows us to process 12-160\% additional operations (160\% occurs for the case $M/E$ where the use of HLC eliminates the delay in write operations thereby allowing clients to issue more operations)

\begin{figure*}
  \centering
  \subfigure[Latency \label{fig:local_prob09_latency_bar}]{
  \begin{tikzpicture}
\begin{axis} [width=1.8\columnwidth,  height=4.5cm, xlabel=Number of client threads per datacenter, ylabel= Latency (ms), legend pos= north west, grid=both, ybar,bar width=6pt, nodes near coords, every node near coord/.append style={rotate=90, anchor=west}, /pgfplots/flexible xticklabels from table={data/local_prob09.csv}, xtick=data,  ymax=35, ymin=0,
legend style={legend columns=7, at={(0.01,0.97)},
/tikz/every even column/.append style={column sep=0.35cm}}]
\addplot table[x=threads, y=N_E_L, col sep=comma] {data/local_prob09.csv};
\addlegendentry{$E$}

\addplot  table[x=threads, y=M_E_L, col sep=comma] {data/local_prob09.csv};
\addlegendentry{$M/E$}

\addplot  table[x=threads, y=N_M_L, col sep=comma] {data/local_prob09.csv};
\addlegendentry{$E/M$}

\addplot  table[x=threads, y=M_M_L, col sep=comma] {data/local_prob09.csv};
\addlegendentry{$M/M$}

\addplot  table[x=threads, y=M_E_H_L, col sep=comma] {data/local_prob09.csv};
\addlegendentry{$M/E_{HLC}$}

\addplot  table[x=threads, y=N_M_H_L, col sep=comma] {data/local_prob09.csv};
\addlegendentry{$E/M_{HLC}$}

\addplot  table[x=threads, y=M_M_H_L, col sep=comma] {data/local_prob09.csv};
\addlegendentry{$M/M_{HLC}$}

\end{axis}
\end{tikzpicture}
  }
  
  \subfigure[Latency \label{fig:local_prob09_latency}]{
  \begin{tikzpicture}
\begin{axis} [xlabel=Number of client threads per datacenter, ylabel= Latency (ms), legend pos= north west, grid=both, legend style={
at={(0.29,0.99)},
anchor=north}]
\addplot table[x=threads, y=N_E_L, col sep=comma] {data/local_prob09.csv};
\addlegendentry{$E$}

\addplot  table[x=threads, y=M_E_L, col sep=comma] {data/local_prob09.csv};
\addlegendentry{$M/E$}

\addplot  table[x=threads, y=N_M_L, col sep=comma] {data/local_prob09.csv};
\addlegendentry{$E/M$}

\addplot  table[x=threads, y=M_M_L, col sep=comma] {data/local_prob09.csv};
\addlegendentry{$M/M$}

\addplot  table[x=threads, y=M_E_H_L, col sep=comma] {data/local_prob09.csv};
\addlegendentry{$M/E_{HLC}$}

\addplot  table[x=threads, y=N_M_H_L, col sep=comma] {data/local_prob09.csv};
\addlegendentry{$E/M_{HLC}$}

\addplot  table[x=threads, y=M_M_H_L, col sep=comma] {data/local_prob09.csv};
\addlegendentry{$M/M_{HLC}$}

\end{axis}
\end{tikzpicture}
  } \quad \quad
  \subfigure[Latency \label{fig:local_prob09_throughput}]{
\begin{tikzpicture}
\begin{axis} [xlabel=Number of client threads per datacenter, ylabel= Throughput (1/s), legend pos= north west, grid=both, legend style={
at={(0.7,0.67)},
anchor=north}]
\addplot  table[x=threads, y=N_E_T, col sep=comma] {data/local_prob09.csv};
\addlegendentry{$E$}

\addplot  table[x=threads, y=M_E_T, col sep=comma] {data/local_prob09.csv};
\addlegendentry{$M/E$}

\addplot  table[x=threads, y=N_M_T, col sep=comma] {data/local_prob09.csv};
\addlegendentry{$E/M$}

\addplot  table[x=threads, y=M_M_T, col sep=comma] {data/local_prob09.csv};
\addlegendentry{$M/M$}

\addplot  table[x=threads, y=M_E_H_T, col sep=comma] {data/local_prob09.csv};
\addlegendentry{$M/E_{HLC}$}

\addplot  table[x=threads, y=N_M_H_T, col sep=comma] {data/local_prob09.csv};
\addlegendentry{$E/M_{HLC}$}

\addplot  table[x=threads, y=M_M_H_T, col sep=comma] {data/local_prob09.csv};
\addlegendentry{$M/M_{HLC}$}

\end{axis}
\end{tikzpicture}
  }
  \caption{The effect of load on the latency and throughput of mixed PUT and GET operations with 10\% remote traffic}
\end{figure*}

To further analyze the effect of locality of traffic on the performance, we consider the system with various local access probabilities. Figures \ref{fig:40threads_latency} and \ref{fig:40threads_throughput} show how the average latency and throughput change as we increase the local access probability for 40 client threads in each datacenter. The change in the latency and throughput is not clear, except for case with local access probability $1$ where the latency collapses, and throughput increases significantly.

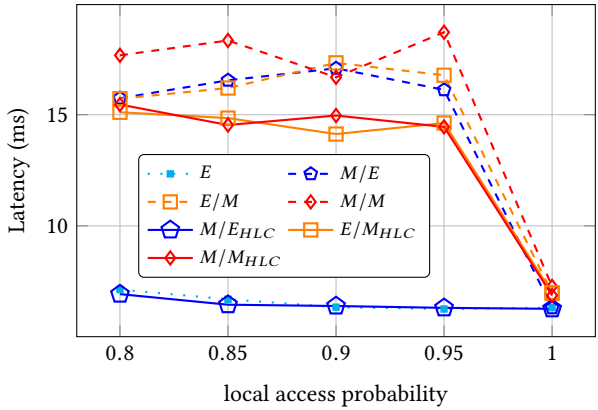
\begin{figure} [h]
\begin{center}
\begin{tikzpicture}
\begin{axis} [xlabel=local access probability, ylabel= Latency (ms), grid=both, legend style={
at={(0.4,0.55)},
anchor=north}]
\addplot  table[x=local_prob, y=N_E_L, col sep=comma] {data/40threads.csv};
\addlegendentry{$E$}

\addplot  table[x=local_prob, y=M_E_L, col sep=comma] {data/40threads.csv};
\addlegendentry{$M/E$}

\addplot  table[x=local_prob, y=N_M_L, col sep=comma] {data/40threads.csv};
\addlegendentry{$E/M$}

\addplot  table[x=local_prob, y=M_M_L, col sep=comma] {data/40threads.csv};
\addlegendentry{$M/M$}

\addplot  table[x=local_prob, y=M_E_H_L, col sep=comma] {data/40threads.csv};
\addlegendentry{$M/E_{HLC}$}

\addplot  table[x=local_prob, y=N_M_H_L, col sep=comma] {data/40threads.csv};
\addlegendentry{$E/M_{HLC}$}

\addplot  table[x=local_prob, y=M_M_H_L, col sep=comma] {data/40threads.csv};
\addlegendentry{$M/M_{HLC}$}

\end{axis}
\end{tikzpicture}
\caption{The effect of local access probability on the latency of mixed PUT and GET operations.}
\label{fig:40threads_latency}
\end{center}
\end{figure}

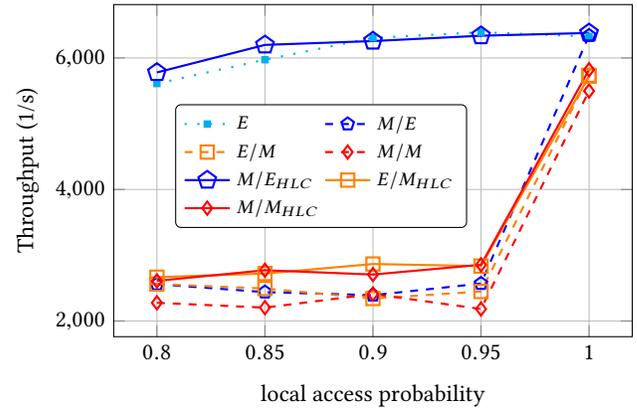
\begin{figure} [h]
\begin{center}
\begin{tikzpicture}
\begin{axis} [xlabel=local access probability, ylabel= Throughput (1/s), grid=both, , legend style={
at={(0.4,0.7)},
anchor=north}]

\addplot  table[x=local_prob, y=N_E_T, col sep=comma] {data/40threads.csv};
\addlegendentry{$E$}

\addplot  table[x=local_prob, y=M_E_T, col sep=comma] {data/40threads.csv};
\addlegendentry{$M/E$}

\addplot  table[x=local_prob, y=N_M_T, col sep=comma] {data/40threads.csv};
\addlegendentry{$E/M$}

\addplot  table[x=local_prob, y=M_M_T, col sep=comma] {data/40threads.csv};
\addlegendentry{$M/M$}

\addplot  table[x=local_prob, y=M_E_H_T, col sep=comma] {data/40threads.csv};
\addlegendentry{$M/E_{HLC}$}

\addplot  table[x=local_prob, y=N_M_H_T, col sep=comma] {data/40threads.csv};
\addlegendentry{$E/M_{HLC}$}

\addplot  table[x=local_prob, y=M_M_H_T, col sep=comma] {data/40threads.csv};
\addlegendentry{$M/M_{HLC}$}
\end{axis}
\end{tikzpicture}
\caption{The effect of local access probability on the throughput of mixed PUT and GET operations.}
\label{fig:40threads_throughput}
\end{center}
\end{figure}

\subsection{Effect of Workload}
\label{sec:workload}
In this section, we want to see how the performance changes for workloads with various natures (e.g, read-heavy, write-heavy). The results are for 40 client threads per datacenter. First, we consider the sticky clients, i.e., 100\% local traffic. Figure \ref{fig:upsertProb_latency1} shows the effect of write proportion on the average latency. As write proportion increases the latency increases which is expected, because write operations are expected to be heavier than read operations. Figure \ref{fig:upsertProb_throughput1} shows how throughput changes as we change the workload for different levels of consistency. Like Section \ref{sec:local}, the overhead of session guarantees for sticky clients is negligible. For pure-read workload, there is no meaningful overhead, as there is no update and all replicas provide the same data. Thus, all session guarantees are always satisfied.

\begin{figure*}
  \centering
  \subfigure[Latency \label{fig:upsertProb_latency1}]{
  \begin{tikzpicture}
\begin{axis}  [xlabel=Write proportion, ylabel= Latency (ms), legend pos= north west, grid=both]

\addplot  table[x=upsert_prob, y=N_E_L, col sep=comma] {data/upsert_prob1.csv};
\addlegendentry{$E$}

\addplot  table[x=upsert_prob, y=M_E_L, col sep=comma] {data/upsert_prob1.csv};
\addlegendentry{$M/E$}

\addplot  table[x=upsert_prob, y=N_M_L, col sep=comma] {data/upsert_prob1.csv};
\addlegendentry{$E/M$}

\addplot  table[x=upsert_prob, y=M_M_L, col sep=comma] {data/upsert_prob1.csv};
\addlegendentry{$M/M$}

\addplot  table[x=upsert_prob, y=M_E_H_L, col sep=comma] {data/upsert_prob1.csv};
\addlegendentry{$M/E_{HLC}$}

\addplot  table[x=upsert_prob, y=N_M_H_L, col sep=comma] {data/upsert_prob1.csv};
\addlegendentry{$E/M_{HLC}$}

\addplot  table[x=upsert_prob, y=M_M_H_L, col sep=comma] {data/upsert_prob1.csv};
\addlegendentry{$M/M_{HLC}$}

\end{axis}
\end{tikzpicture}}
\quad \quad
  \subfigure[Throughput \label{fig:upsertProb_throughput1}]{
  \begin{tikzpicture}
\begin{axis} [xlabel=Write proportion, ylabel= Throughput (1/s), legend pos= north east, grid=both]

\addplot  table[x=upsert_prob, y=N_E_T, col sep=comma] {data/upsert_prob1.csv};
\addlegendentry{$E$}

\addplot  table[x=upsert_prob, y=M_E_T, col sep=comma] {data/upsert_prob1.csv};
\addlegendentry{$M/E$}

\addplot  table[x=upsert_prob, y=N_M_T, col sep=comma] {data/upsert_prob1.csv};
\addlegendentry{$E/M$}

\addplot  table[x=upsert_prob, y=M_M_T, col sep=comma] {data/upsert_prob1.csv};
\addlegendentry{$M/M$}

\addplot  table[x=upsert_prob, y=M_E_H_T, col sep=comma] {data/upsert_prob1.csv};
\addlegendentry{$M/E_{HLC}$}

\addplot  table[x=upsert_prob, y=N_M_H_T, col sep=comma] {data/upsert_prob1.csv};
\addlegendentry{$E/M_{HLC}$}

\addplot  table[x=upsert_prob, y=M_M_H_T, col sep=comma] {data/upsert_prob1.csv};
\addlegendentry{$M/M_{HLC}$}

\end{axis}
\end{tikzpicture}}
 \caption{The effect of workload on latency and throughput of mixed PUT and GET operations with 100\% local traffic}
 \label{fig:results}
 \vspace*{-5mm}
\end{figure*}

Figures \ref{fig:upsertProb_latency09_bar} and \ref{fig:upsertProb_latency09} show the effect of write proportion on the average latency for the case with 10\% remote traffic. Again, we see that the difference between eventual consistency and stronger consistency models is more obvious due to inter-datacenter replication latency. Also, like all other results $M/E_{HLC}$ provide the same performance as eventual consistency. For $M/M_{HLC}$ consistency level, the latency drops as we move from 0.75 write proportion to 1. The reason is with write proportion 1, there is no read operation, and for write operations, $M/M_{HLC}$ is the same as $M/E_{HLC}$. Thus, it does not require any blocking. Figure \ref{fig:upsertProb_throughput09} shows how throughput changes as we change the workload for different levels of consistency with 10\% remote access.

\subsection{Implication of Experimental Results}
From these results, we find that if we introduce session guarantees but the client remains within the same datacenter, the overhead is within experimental error. Furthermore, even if the client changes its datacenter, the cost of increased latency is very small, \textasciitilde$10 (ms)$. 
From these results, we find that the session guarantees considered in this paper can be achieved with a very low cost. Without session guarantees, the application can suffer from issues such as the user writes a new key value, but obtains the old key value when he reads it, or the user changes the password twice, but finds that the stored password is not the same as the latest value. 
Such common but highly undesirable problems with eventual consistency can be eliminated with a very low overhead with session guarantees. 

Finally, this low overhead also suggests that it is not worthwhile to attempt session guarantees by having clients cache the data they have read or written, as it complicates the design of clients substantially. This is especially important to support computing-challenged clients.

\begin{figure*}
  \centering
  \subfigure[Latency \label{fig:upsertProb_latency09_bar}]{
  \begin{tikzpicture}
\begin{axis}  [width=1.8\columnwidth, xlabel=Write proportion, ylabel= Latency (ms), legend pos= north west, grid=both, ybar,bar width=6pt, nodes near coords, every node near coord/.append style={rotate=90, anchor=west}, /pgfplots/flexible xticklabels from table={data/local_prob09.csv}, xtick=data,  ymax=46, ymin=0, 
legend style={legend columns=7, at={(0.01,0.97)},
/tikz/every even column/.append style={column sep=0.35cm}}]

\addplot  table[x=upsert_prob, y=N_E_L, col sep=comma] {data/upsert_prob09.csv};
\addlegendentry{$E$}

\addplot  table[x=upsert_prob, y=M_E_L, col sep=comma] {data/upsert_prob09.csv};
\addlegendentry{$M/E$}

\addplot  table[x=upsert_prob, y=N_M_L, col sep=comma] {data/upsert_prob09.csv};
\addlegendentry{$E/M$}

\addplot  table[x=upsert_prob, y=M_M_L, col sep=comma] {data/upsert_prob09.csv};
\addlegendentry{$M/M$}

\addplot  table[x=upsert_prob, y=M_E_H_L, col sep=comma] {data/upsert_prob09.csv};
\addlegendentry{$M/E_{HLC}$}

\addplot  table[x=upsert_prob, y=N_M_H_L, col sep=comma] {data/upsert_prob09.csv};
\addlegendentry{$E/M_{HLC}$}

\addplot  table[x=upsert_prob, y=M_M_H_L, col sep=comma] {data/upsert_prob09.csv};
\addlegendentry{$M/M_{HLC}$}

\end{axis}
\end{tikzpicture}
  } 
  \subfigure[Latency \label{fig:upsertProb_latency09}]{
  \begin{tikzpicture}
\begin{axis}  [xlabel=Write proportion, ylabel= Latency (ms), legend pos= north west, grid=both]

\addplot  table[x=upsert_prob, y=N_E_L, col sep=comma] {data/upsert_prob09.csv};
\addlegendentry{$E$}

\addplot  table[x=upsert_prob, y=M_E_L, col sep=comma] {data/upsert_prob09.csv};
\addlegendentry{$M/E$}

\addplot  table[x=upsert_prob, y=N_M_L, col sep=comma] {data/upsert_prob09.csv};
\addlegendentry{$E/M$}

\addplot  table[x=upsert_prob, y=M_M_L, col sep=comma] {data/upsert_prob09.csv};
\addlegendentry{$M/M$}

\addplot  table[x=upsert_prob, y=M_E_H_L, col sep=comma] {data/upsert_prob09.csv};
\addlegendentry{$M/E_{HLC}$}

\addplot  table[x=upsert_prob, y=N_M_H_L, col sep=comma] {data/upsert_prob09.csv};
\addlegendentry{$E/M_{HLC}$}

\addplot  table[x=upsert_prob, y=M_M_H_L, col sep=comma] {data/upsert_prob09.csv};
\addlegendentry{$M/M_{HLC}$}

\end{axis}
\end{tikzpicture}}
\quad \quad
  \subfigure[Throughput \label{fig:upsertProb_throughput09}]{
  \begin{tikzpicture}
\begin{axis} [xlabel=Write proportion, ylabel= Throughput (1/s), legend pos= north east, grid=both]

\addplot  table[x=upsert_prob, y=N_E_T, col sep=comma] {data/upsert_prob09.csv};
\addlegendentry{$E$}

\addplot  table[x=upsert_prob, y=M_E_T, col sep=comma] {data/upsert_prob09.csv};
\addlegendentry{$M/E$}

\addplot  table[x=upsert_prob, y=N_M_T, col sep=comma] {data/upsert_prob09.csv};
\addlegendentry{$E/M$}

\addplot  table[x=upsert_prob, y=M_M_T, col sep=comma] {data/upsert_prob09.csv};
\addlegendentry{$M/M$}

\addplot  table[x=upsert_prob, y=M_E_H_T, col sep=comma] {data/upsert_prob09.csv};
\addlegendentry{$M/E_{HLC}$}

\addplot  table[x=upsert_prob, y=N_M_H_T, col sep=comma] {data/upsert_prob09.csv};
\addlegendentry{$E/M_{HLC}$}

\addplot  table[x=upsert_prob, y=M_M_H_T, col sep=comma] {data/upsert_prob09.csv};
\addlegendentry{$M/M_{HLC}$}

\end{axis}
\end{tikzpicture}}
 \caption{The effect of workload on latency and throughput of mixed PUT and GET operations with 10\% remote traffic}
 \label{fig:upsertResults10}
 \vspace*{-5mm}
\end{figure*}

\section{Related Work}
\label{sec:related}

Terry et al. provided a protocol for providing session guarantees in \cite{xerox}. An important improvement over \cite{xerox} is utilizing HLCs to avoiding blocking write operations. Also, we consider the per-key versions of session guarantees considered in \cite{xerox}. This allows us to avoid the overhead of inter-partition communication which in turns eliminates the possibility of slowdown cascade \cite{facebook} for systems with a large number of partitions. The protocol proposed in \cite{xerox} for providing session guarantees is used in Bayou architecture \cite{bayou1, bayou2}. Similar approach is provided in \cite{poznan1, poznan2}. Bermbach et al. \cite{middle} have provided a middleware to provide monotonic-read and read-your-write guarantees on top of eventually consistency systems. There are no experimental results provided in existing works such as \cite{xerox,bayou1,bayou2, poznan1, poznan2, middle} to evaluate the cost of providing session guarantees. In our work, we investigated the cost of providing session guarantees in our architecture that uses Raft for replication.

In recent years, causal consistency has gotten significant attention from the community \cite{cops, eiger, orbe,gentleRain,gentleRain+,causalSpartan,cantBelieve,writeFast,boltOn, okapi, wren, friendOrFoe, vaidya}. Causal consistency is especially interesting, as it is the strongest consistency model that remains available under network partitions \cite{lorenzo}. However, for this availability clients must be sticky \cite{causalSpartan}. The broad approach for providing causal consistency in these systems is to track the causal dependencies of a version, and check them before making the version visible in another replica. This requires inter-partition communication that results in more overhead. This results in higher visibility latency in causal systems. This delay can also result in slowdown cascade that impacts the application further \cite{facebook}.  

To avoid slowdown cascade, \cite{cantBelieve} makes updates visible as soon as they arrive a replica, but at the read time informs the client, if it cannot guarantee the causal consistency according to client's session. Although this reduces the update visibility latency, tracking causal dependencies has much more overhead than per-key session guarantees considered in our paper. Basically, to accurately track causal dependencies of a version, we have to store timestamps of size $O(n)$ where $n$ is the number of all partitions that we can write \cite{charron}. For systems with thousands of partitions such as ours, such overhead is not practical. We can reduce the size of metadata at the cost of false positive causal consistency violation detection. Generally, there is a trade-off between the overhead of metadata to track causal dependencies and accuracy of tracking \cite{adaptive}. To reduce the overhead of causal dependency tracking, session-based causal systems such as \cite{cantBelieve} assumes more restrictive assumptions. For instance, \cite{cantBelieve} reduces $n$ by assuming a single-leader replication where clients can only write on a single machine per partition. On the other hand, in our system, a client can write on any leader in any datacenter (e.i., multi-leader). Also, in our protocol, we only need to store a single scalar timestamp with each key. 
Considering the overhead of causal consistency in terms of storage, network communication, and visibility latency, we decided to use session guarantees instead.

HLCs are used in some existing protocols to overcome the imperfection of clock synchronization. HLCs are used in \cite{gentleRain+,causalSpartan} to provide causal consistency for basic read and write operations for a key-value store. Spirovska et al. \cite{wren} have utilized HLCs for transactional causal consistency. HLCs are also used in the context of stronger consistency. CockroachDB has replaced TrueTime in Spanner \cite{spanner} with HLC to prevent uncertainty windows \cite{cockroachDB}.

\section{Conclusion}
\label{sec:con}

In this paper, we presented our analysis of the cost associated with providing session guarantees for NuKV, a key-value store that aims to simultaneously provide high availability and consistency for eBay services. In particular, NuKV maintains the data in multiple datacenters and each datacenter contains multiple replicas of the data. Using Raft, NuKV allows clients to obtain sequential consistency by accessing any replica in the given datacenter. At the same, it provides flexibility where a client can access any datacenter. When a client changes its datacenter, NuKV provides eventual consistency. 

In this paper, we showed that one could upgrade the guarantees provided to the client when it switches a datacenter at a low cost. Specifically, we showed that the increased latency to provide session guarantees is negligible (less than 1 $ms$). 
We demonstrated this for different types of workloads (100\% read, read-heavy, write-heavy and 100\% write). Our analysis showed that providing monotonic-write-follows-read highly benefits with the use of HLCs \cite{hlc}.  Specifically, the cost of providing monotonic-write-follows-read with HLC was $0$ whereas the cost of providing it with just physical clock alone was much higher. 

Our design was based on a modified version of session guarantees that enabled us to avoid slowdown cascade where the delay of receiving some data item causes visibility of other data items from being available to the clients in a cascading fashion. This cascading delay is one of the important obstacles in providing an even higher level of consistency, namely causal consistency. From this analysis, we argue that providing session guarantees is a significantly less expensive but highly valuable approach for key-value stores. 

\bibliography{bib2}

\newpage
\clearpage
\appendix
\section{Correctness}
\label{sec:correctness}
In this section, we provide the correctness of the protocol provided in Section \ref{sec:protocol}. 

\begin{lemma}
\label{lem:svIsSmaller}
When a server $s$ has not committed write $w$ with log index $log\_idx$ written in data center $d$, $sv[d] < log\_idx$. 
\end{lemma}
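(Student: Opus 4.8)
The plan is to reduce the lemma to a single monotonicity invariant and then read off the inequality. The server-side code updates the stable vector only by the assignment $sv[v.dc\_id] \leftarrow log\_idx$ on each commit at a non-XC server (Line~\ref{line:updateSV}), so if I can show that the writes originating in datacenter $d$ are committed at any server $s$ in strictly increasing order of their $log\_idx$, then $sv[d]$ is non-decreasing and, at every moment, equals the largest $log\_idx$ of a write from $d$ that $s$ has already committed (with $sv[d]$ taking its initial value $0$ when no such write has committed). The lemma then follows by contraposition, as explained below.

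The core step is therefore the ordering invariant, and I would prove it from two Raft properties---log entries are committed in log-index order, so the committed entries always form a prefix, and each log position holds a unique entry---together with how $log\_idx$ is assigned. For a write $w$ produced by $L_d$, Line~\ref{line:appendLogPut} gives it a $log\_idx$ equal to its position in $d$'s Raft log, so within $d$ the writes from $d$ sit at increasing positions and hence commit in increasing $log\_idx$ order. For a remote datacenter $d'$, the writes from $d$ are delivered only through $XC_d$, which scans its committed log in order and emits \textsc{Replicate} messages in increasing $log\_idx$ order; the FIFO-channel assumption preserves that order, so $L_{d'}$ appends them (Line~\ref{line:appendLogReplicate}) carrying the original $log\_idx$ and in increasing order, and Raft commits them in that same order. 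In either case the writes from $d$ appear in the log of $s$ in increasing $log\_idx$ order, and the prefix-commit property turns this into increasing commit order.

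To finish, suppose $s$ has not committed $w$, the write from $d$ with index $log\_idx$. Since $w$ occupies a log position before every write from $d$ with a larger index, the prefix-commit property forbids any such larger-indexed write from being committed at $s$ while $w$ is not; indices being unique, no write from $d$ with index $\ge log\_idx$ is committed at $s$. Hence the largest committed $log\_idx$ from $d$---which is exactly $sv[d]$ by the first paragraph---is strictly smaller than $log\_idx$, giving $sv[d] < log\_idx$. I expect the only genuine subtlety to be the cross-datacenter half of the ordering invariant: because a replicated write keeps its home $log\_idx$ when it is re-appended in $d'$, I must lean squarely on the FIFO assumption and on $XC_d$ reading its log in order, and I would double-check that the suppression of duplicate XC writes never reorders the entries that actually commit.
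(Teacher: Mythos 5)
Your proposal is correct and follows essentially the same route as the paper's proof: both arguments combine the FIFO cross-datacenter channel assumption with Raft's total order to conclude that writes originating in datacenter $d$ commit at every server in increasing order of their original $log\_idx$, so that $sv[d]$ always equals the largest index from $d$ committed so far, and the inequality follows. Your version simply makes explicit what the paper leaves implicit (the monotonicity invariant for $sv[d]$, the local versus remote case split, and the fact that replicated entries retain their home-datacenter index), which is a faithful elaboration rather than a different argument.
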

\begin{proof}
By the FIFO assumption of the cross-datacenter channels, and the total order provided by the Raft algorithm \cite{raft}, we know that versions are committed with the same order that are written in their original datacenters. Thus, when write $w$ has not committed on a server, all previously committed writes written in datacenter $d$ have been written before $w$. Thus, their log index in smaller than $log\_idx$ which leads to $sv[d] < log\_idx$.
\end{proof}

Now, using the above lemma, we prove the monotonic-read consistency of the protocol. Specifically, 

\begin{theorem} [Monotonic-read]
Any operation reading a key on partition $p$ by client $c$ with $hrv_c = hrm[:,p]$ satisfies monotonic-read consistency. 
\end{theorem}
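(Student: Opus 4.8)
The plan is to fix an arbitrary write $w \in ClientReads(c,k,t)$ and trace what the protocol guarantees about it, following $w$ from the moment the client first read its value to the moment the current read operation $O$ returns. The goal is to show that $w$ must already be committed at the reading server $s$ at the read time $t'$, i.e. $w \in CommittedWrites(s,k,t')$; since $w$ is arbitrary, this is exactly monotonic-read consistency.

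First I would analyze the bookkeeping the client performs when it reads $w$. Because $w \in ClientReads(c,k,t)$, at some earlier point the client received a \textsc{GetReply} whose highest-timestamp version for $k$ was the one written by $w$; that reply carried $dc\_id = w.dc\_id$ and $log\_idx = sv[w.dc\_id]$ evaluated at the server that served it. Since clients read only committed versions (R2), that server had committed $w$, and by the commit handler (Line \ref{line:updateSV}) together with the FIFO/Raft ordering underlying Lemma \ref{lem:svIsSmaller}, its $sv[w.dc\_id]$ was at least the log index of $w$. Hence on Line \ref{line:addToHrv} the client set $hrm[w.dc\_id,p] \ge w.log\_idx$, and since every subsequent update takes a $\max$, this inequality persists up to the invocation of $O$.

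Next I would invoke the server-side blocking rule. For the current read the monotonic-read level sends $hrv = hrm[:,p]$, so $hrv[w.dc\_id] \ge w.log\_idx$. The handler blocks (Line \ref{line:getBlock}) until $sv[i] \ge hrv[i]$ for every $i$; in particular, at the read time $t'$ the reading server $s$ satisfies $sv[w.dc\_id] \ge hrv[w.dc\_id] \ge w.log\_idx$. Applying the contrapositive of Lemma \ref{lem:svIsSmaller} --- if $s$ had not committed $w$ then $sv[w.dc\_id] < w.log\_idx$ --- we conclude that $s$ has committed $w$ by time $t'$. As $w$ is a write for key $k$, this yields $w \in CommittedWrites(s,k,t')$, completing the argument.

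The main obstacle I anticipate is the indirection in what the client actually records: it stores the returned $sv[w.dc\_id]$ rather than $w$'s own log index, so I must justify the ``committed $\Rightarrow sv \ge log\_idx$'' direction, which is the converse of Lemma \ref{lem:svIsSmaller} and rests on the commit handler assigning $sv[v.dc\_id] = log\_idx$ (Line \ref{line:updateSV}) together with the monotone growth of $sv$. A secondary point worth stating explicitly is that log indices are preserved across datacenters --- the \textsc{Replicate} handler re-appends each version under its original $log\_idx$ (Line \ref{line:appendLogReplicate}) --- so that $w.log\_idx$ is a single well-defined quantity to which both $hrm[w.dc\_id,p]$ and every server's $sv[w.dc\_id]$ refer consistently.
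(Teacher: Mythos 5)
Your proposal is correct and takes essentially the same route as the paper's proof: both rest on Lemma \ref{lem:svIsSmaller}, the client-side update at Line \ref{line:addToHrv}, and the server-side blocking condition at Line \ref{line:getBlock}, with the only structural difference being that you argue directly via the contrapositive of the lemma where the paper argues by contradiction. If anything, you are slightly more careful than the paper: you explicitly justify the ``committed $\Rightarrow sv[d] \geq log\_idx$'' converse direction (needed because the \textsc{GetReply} carries the serving server's $sv$ entry rather than $w$'s own log index), a step the paper's proof passes over silently when it asserts $hrv[d] \geq log\_idx$ from Line \ref{line:addToHrv}.
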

\begin{proof}
Let $O$ be an operation by client $c$ that reads the value of key $k$ at time $t$ at server $s$. If $O$ does not satisfy monotonic-read consistency, there exists a $w \in ClientReads (c, k, t)$ written originally at datacenter $d$ with log index $log\_idx$ that is not included in $CommittedWrites (s, k, t')$, i.e., $w$ has not committed at $s$ at time $t'$ where $t'$ is the time of reading value for $k$. Thus, according to Lemma \ref{lem:svIsSmaller}, $sv[d] < log\_idx$ at time $t'$. Since client $c$ has read $w$, $hrv[d] \geq log\_idx$, according to Line \ref{line:addToHrv} of Algorithm \ref{alg:client-side}. Since $O$ returns at time $t'$, $sv[d] > hrv[d]$ at time $t'$, according to Line \ref{line:getBlock}. Thus, $sv[d] \geq log\_idx$ at time $t'$  (contradiction). 
\end{proof}

In a similar way, we can prove following theorem: 

\begin{theorem} [Read-your-write]
Any operation reading a key on partition $p$ by client $c$ with $hwv_c = hwm[:,p]$ satisfies read-your-write consistency for $f$. 
\end{theorem}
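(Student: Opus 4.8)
The plan is to mirror the monotonic-read proof almost line for line, substituting the write bookkeeping for the read bookkeeping. First I would argue by contradiction: assume that $O$ reads key $k$ on partition $p$ at server $s$ at time $t$ but fails read-your-write consistency. Then there is some $w \in ClientWrites(c,k,t)$, originally written in datacenter $d$ with log index $log\_idx$, that is not contained in $CommittedWrites(s,k,t')$, where $t'$ is the time at which $O$ actually reads the value. Since $w$ has not committed at $s$ by time $t'$, Lemma \ref{lem:svIsSmaller} immediately yields $sv[d] < log\_idx$ at $s$ at time $t'$.

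Next I would use the client-side bookkeeping to derive a contradiction. Because client $c$ itself performed the write $w$, Line \ref{line:addToHwv} of Algorithm \ref{alg:client-side} guarantees that the $hwm$ entry for datacenter $d$ and partition $p$ was raised to at least $log\_idx$. Since the GET request carries $hwv = hwm[:,p]$, this gives $hwv[d] \geq log\_idx$. The GET handler blocks (Line \ref{line:getBlock}) as long as $sv[i] < hwv[i]$ for some $i$, so the very fact that $O$ returned at time $t'$ forces $sv[d] \geq hwv[d] \geq log\_idx$ at $t'$. This contradicts $sv[d] < log\_idx$, closing the argument exactly as in the monotonic-read theorem.

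The only real obstacle, given how closely this tracks the monotonic-read case, is index alignment. I would take care to verify that $w$ is recorded under the same partition row $p$ that the later read queries (so that its entry genuinely reappears in the $hwv$ vector sent with the request), and that the datacenter coordinate $d$ is consistent across Lemma \ref{lem:svIsSmaller}, the client-side update on Line \ref{line:addToHwv}, and the blocking condition on Line \ref{line:getBlock}. Once this alignment is confirmed, the chain of inequalities produces the contradiction with no further work. I would also note, for completeness, that implicit requirements R1 and R2 rule out the trivial degenerate behavior, so the guarantee is nonvacuous.
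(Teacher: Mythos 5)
Your proof is correct and is exactly the argument the paper intends: the paper omits the proof, stating only that it follows ``in a similar way'' from the Monotonic-read theorem, and your line-for-line substitution of $hwm$/$hwv$ (via Line \ref{line:addToHwv}) for $hrm$/$hrv$, combined with Lemma \ref{lem:svIsSmaller} and the blocking condition of Line \ref{line:getBlock}, is precisely that adaptation. Your extra care about partition-row and datacenter-index alignment is a reasonable sanity check but adds nothing beyond what the paper's template already requires.
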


Now, we prove the correctness of our protocol for session guarantees of write operations. 

\begin{theorem} [Monotonic-write]
Any operation writing a key on partition $p$ by client $c$ with $dt = dt_w$ satisfies monotonic-write consistency. 
\end{theorem}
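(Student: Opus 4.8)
The plan is to argue entirely through the HLC timestamps that the protocol assigns to writes. Let $O$ be the write by client $c$ on key $k$ issued at time $t$ with $dt = dt_w$, and let $t_O$ denote the HLC timestamp that its leader $L_d$ stamps onto $O$. The first and central step is to establish the inequality $t_O > t_w$ for every other write $w \neq O$ with $w \in ClientWrites(c,k,t)$, where $t_w$ is the timestamp of $w$. Since a single client issues its operations sequentially, each such $w$ completed before $O$ was issued, and by Line \ref{line:updatingDtw} of Algorithm \ref{alg:client-side} the client executed $dt_w = \max(dt_w, t_w)$ upon receiving $w$'s reply. As $dt_w$ only ever increases, its value at the moment $O$ is issued is at least $t_w$. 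On the server side, handling $O$ invokes $updateHLC(dt_w)$ in Line \ref{line:updateHLC}; by the defining message-receipt property of HLCs \cite{hlc}, this advances $HLC_d$ strictly above the supplied argument, so $t_O = HLC_d > dt_w \geq t_w$.

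The second step uses the read rule. Fix any server $s$ and time $t'$ with $O \in CommittedWrites(s,k,t')$, and consider any read of $k$ performed at $s$ while $O$ is committed there; since a committed version remains committed (Raft commitment is stable), this covers every read at $s$ at time $t'$ or later. By Line \ref{line:getHighestVersion}, such a read returns the committed version of $k$ with the highest timestamp. Because $O$ is one of the committed versions for $k$ at $s$, the returned version $v_{read}$ has timestamp $t_{read} \geq t_O$.

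I would then close by contradiction. Suppose some client accessing $s$ reads a value written by a write $w \neq O$ with $w \in ClientWrites(c,k,t)$; then $v_{read}$ is $w$'s version, so $t_{read} = t_w$. Combining the two steps yields $t_w = t_{read} \geq t_O > t_w$, which is impossible. Hence no client reading at $s$ obtains a value from such a $w$, which is exactly per-key monotonic-write consistency (Definition \ref{def:mw}).

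I expect the main obstacle to be making the first step airtight: justifying that $updateHLC(dt_w)$ forces $t_O > dt_w$ in the lexicographic order on $\langle l, c\rangle$ pairs. This requires a short case analysis over the branches of the $updateHLC$ routine, distinguishing whether $HLC_d.l$ ends up strictly above $dt_w.l$ (in which case the inequality is immediate) or equal to it (in which case one must check that the counter is incremented past $dt_w.c$, so the tie is broken upward). This is the standard HLC guarantee, but it is the one place where the argument depends on the concrete timestamping rule rather than on generic ordering facts, together with the accompanying observation that $dt_w$ genuinely dominates the timestamps of all earlier same-client writes on $k$.
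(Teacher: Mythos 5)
Your proof is correct and follows essentially the same route as the paper's: Line \ref{line:updatingDtw} gives $dt_w \geq t_w$ for all prior same-client writes, Line \ref{line:updateHLC} gives $HLC_d > dt_w$ so the new version's timestamp strictly dominates, and Lines \ref{line:addVersion} and \ref{line:getHighestVersion} then force any read at a server where $O$ is committed to return a timestamp at least $t_O$, yielding the same contradiction. Your closing remark about the case analysis inside $updateHLC$ is a fair point of rigor that the paper simply asserts by appeal to the HLC algorithm, but it does not change the argument.
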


\begin{proof}
Let $O$ be an operation by client $c$ that writes version $v$ for key $k$ at time $t$. If $O$ does not satisfy monotonic-write consistency, there exists server $s$ and time $t'$ such that $CommittedWrites (s,k,t')$ includes $O$, but a client accessing $s$ reads a version $v'$ written by write $w \neq O$ included in $ClientWrites(c,k,t)$. By Line \ref{line:updatingDtw} of Algorithm \ref{alg:client-side}, $dt_w$ is higher than $v'.t$ at time of writing $v$. By Line \ref{line:updateHLC} of Algorithm \ref{alg:server-side}, and HLC algorithm provided in $updateHLC$ function, we know $HLC_d > dt_w$. Thus, $v.t > v'.t$. Since $O$ is included in $CommittedWrites (s,k,t')$, $v$ is in the version chain for key $k$ according to Line \ref{line:addVersion}. Thus, according to Line \ref{line:getHighestVersion}, server $s$ never returns $v'$ with a smaller timestamp than $v.t$ (contradiction). 
\end{proof}

In a similar way, we can prove following theorem: 
\begin{theorem} [Write-follows-reads]
Any operation writing a key on partition $p$ by client $c$ with $dt = dt_r$ satisfies write-follows-reads consistency. 
\end{theorem}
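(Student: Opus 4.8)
The plan is to reproduce the structure of the Monotonic-write proof almost verbatim, substituting the read-tracking quantity $dt_r$ for the write-tracking quantity $dt_w$ and replacing $ClientWrites(c,k,t)$ by $ClientReads(c,k,t)$. I would argue by contradiction: suppose the write operation $O$, which creates version $v$ for key $k$ and is issued with $dt = dt_r$, violates Definition \ref{def:wfr}. Then there exist a server $s$ and a time $t'$ such that $v$ appears in $CommittedWrites(s,k,t')$, yet some client accessing $s$ reads a version $v'$ produced by a write $w \neq O$ with $w \in ClientReads(c,k,t)$.

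The core of the argument is a single chain of timestamp inequalities. First, because $w \in ClientReads(c,k,t)$, client $c$ read $v'$ at some moment no later than $t$; at that moment the GET reply carried the timestamp $v'.t$, and the client executed its $dt_r \leftarrow \max(dt_r, t)$ update, so $dt_r \geq v'.t$ holds when $O$ is issued. Second, since $O$ carries $dt = dt_r$, the server's call to \emph{updateHLC} in Line \ref{line:updateHLC} (together with the HLC rules) forces $HLC_d$ strictly above $dt_r$, so the timestamp assigned to $v$ satisfies $v.t > dt_r \geq v'.t$. Finally, because $O \in CommittedWrites(s,k,t')$, version $v$ lies in the version chain of $k$ at $s$ by Line \ref{line:addVersion}, and the GET handler returns the version of $k$ with the highest timestamp (Line \ref{line:getHighestVersion}); since $v.t > v'.t$, server $s$ can never return $v'$, contradicting the assumption.

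I expect the only step needing genuine care---rather than a mechanical copy of the Monotonic-write proof---to be the first inequality, $dt_r \geq v'.t$. In the write case this is immediate from the labelled $dt_w$ update (Line \ref{line:updatingDtw}); here I must instead appeal to the fact that every read returns, and the client records, the timestamp of the version actually read, so reading $v'$ necessarily lifts $dt_r$ to at least $v'.t$. The remaining ingredients---the strictness of the HLC advance in \emph{updateHLC} and the ``highest-timestamp-wins'' read rule---are exactly those already used for Monotonic-write, so no new machinery is required.
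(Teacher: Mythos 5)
Your proposal is correct and is precisely the argument the paper intends: the paper does not spell out this proof, stating only that it follows ``in a similar way'' from the Monotonic-write theorem, and your substitution of $ClientReads$ for $ClientWrites$ and $dt_r$ for $dt_w$ --- with the key inequality $dt_r \geq v'.t$ now justified by the client's $dt_r = \max(dt_r, t)$ update in the GET handler rather than the $dt_w$ update in the PUT handler --- is exactly that analogue, carried out correctly.
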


Finally, it is straightforward to see that the protocol provided in Section \ref{sec:protocol} satisfies implicit requirements R1 and R2 provided in Section \ref{sec:consistency}, as GET always return committed versions, and all servers finally receive any committed version. 

\end{document}